\newtheorem{theorem}{Theorem}
\newtheorem{lemma}{Lemma}
\newtheorem*{lemma*}{Lemma}
\newtheorem*{claim*}{Claim}
\newtheorem{cor}{Corollary}
\theoremstyle{definition}
\newtheorem{assumption}{Assumption}
\theoremstyle{remark}
\newtheorem*{remark}{Remark}
\begin{document}
\title{On the Rate of Convergence of Mean-Field Models: Stein's Method Meets the Perturbation Theory}
\author{\IEEEauthorblockN{Lei Ying}\\
\IEEEauthorblockA{School of Electrical, Computer and Energy Engineering\\
Arizona State University, Tempe, AZ 85287\\
lei.ying.2@asu.edu}}
\maketitle
\begin{abstract}
This paper studies the rate of convergence of a family of continuous-time Markov chains (CTMC) to a mean-field model. When the mean-field model is a finite-dimensional dynamical system with a unique equilibrium point, an analysis based on Stein's method and the perturbation theory shows that under some mild conditions, the stationary distributions of CTMCs converge (in the mean-square sense) to the equilibrium point of the mean-field model if the mean-field model is {\em globally asymptotically stable} and {\em locally exponentially stable}. In particular, the mean square difference between the $M$th CTMC in the steady state and the equilibrium point of the mean-field system is $O(1/M),$ where $M$ is the size of the $M$th CTMC.  This approach based on Stein's method provides a new framework for studying the convergence of CTMCs to their mean-field limit by mainly looking into the stability of the mean-field model, which is a deterministic system and is often easier to analyze than the CTMCs.  More importantly, this approach quantifies the rate of convergence, which reveals the approximation error of using mean-field models for approximating finite-size systems. \end{abstract}

\section{Introduction}
The mean-field method  is to study large-scale and complex stochastic systems using simple deterministic models. The idea of the mean-field method is to assume the states of nodes in a large-scale system are independently and identically distributed (i.i.d.). Based on this i.i.d. assumption, in a large-scale system, the interaction of a node to the rest of the system can be replaced with an ``average'' interaction, and the evolution of the system can then be modeled as a deterministic dynamical system, called a {\em mean-field model}. Then the macroscopic behaviors of the stochastic system can be approximated using the mean-field model, e.g., the stationary distribution of the stochastic system may be approximated using the equilibrium point of the mean-field model. The mean-field method has important applications in various areas including statistical physics, epidemiology, queueing theory, and game theory. Here are just a few examples of these applications \cite{Kad_09,Bai_75,BarKarKel_92,VveDobKar_96,Mit_96,LasLio_07}.

This paper focuses on the use of mean-field models for stationary distribution approximation of CTMCs, which has been used in queueing networks and recently in cloud computing systems for quantifying the performance of large-scale communication and computing systems such as data centers.  Besides solving the equilibrium point of the mean-field model, a critical component of the mean-field method in this application is to prove that the stationary distributions of a family of CTMCs indeed converge to the equilibrium point of the mean-field model, which justifies the use of the mean-field model.

In this paper, we consider a family of CTMCs, where the $M$th CTMC is an $M$-dimensional continuous-time Markov chain ${\bf W}^{(M)}\in {\cal U}^M,$ where the superscript $M$ is the number of nodes (or called particles) in the system and ${\cal U}^M\subseteq {\bf R}^n$ is the state space of the CTMC. We assume ${\cal U}$ is a finite state space and the CTMC is irreducible. Without loss of generality, let ${\cal U}=\{1, \cdots, n\}.$
We further define $$x^{(M)}_i(t)=\frac{1}{M}\sum_{m=1}^M {\mathbf 1}_{W_m^{(M)}(t)=i}$$ where $\mathbf 1$ is the indicator function, so $x^{(M)}_i(t)$ is the fraction of nodes in state $i$ at time $t.$ This paper focuses on the case such that ${\bf x}^{(M)}=\left\{{\bf x}^{(M)}(t), t\geq 0\right\}$ is also an ($n$-dimensional) CTMC, i.e., the CTMC is  a population process \cite{Kur_71,Kur_81}. We remark that many applications of the mean-field method such as those in queueing networks and epidemiology are for population processes.

Now let ${\bf x}^{(M)}(\infty)$ denote the stationary distribution of the $M$th CTMC. Furthermore, let ${\bf x}(t)$ denote the solution of an associated mean-field model and ${\bf x}^*$ denote its equilibrium point. Existing approaches for proving the convergence of ${\bf x}^{(M)}(\infty)$ to  ${\bf x}^*$ often involve the following three components.
\begin{itemize}
\item[(1)] The first component is to show the convergence of CTMCs to the trajectory of the mean-field model for any finite time interval $[0,t],$ i.e.,
\begin{equation}\lim_{M\rightarrow\infty} \sup_{0\leq s\leq t}d({\bf x}^{(M)}(s),{\bf x}(s))=0,\label{eq:finitetime}\end{equation} where $d(\cdot,\cdot)$ is some measure of distance. This can be proved using different techniques including Kurtz's theorem \cite{Kur_71,Kur_81,Mit_96,YinSriKan_15}, propagation of chaos \cite{Szn_91,AnaBen_93,BraLuPra_12}, or the convergence of the transition semigroup of CTMCs \cite{VveDobKar_96,MukMaz_13}.

\item[(2)] The second component is to prove the asymptotic stability of the mean-field model, i.e., $$\lim_{t\rightarrow\infty} {\bf x}(t)={\bf x}^*.$$ Lyapunov theorem or LaSalle invariance principle can often be used for proving the stability.

\item[(3)] After establishing the previous two results, we obtained
$$\lim_{t\rightarrow\infty} \lim_{M\rightarrow\infty}  {\bf x}^{(M)}(t)=\lim_{t\rightarrow\infty} {\bf x}(t)={\bf x}^*.$$ The convergence of the stationary distributions can then be concluded if we can prove the interchange of the limits, i.e., $$\lim_{M\rightarrow\infty}{\bf x}^{(M)}(\infty)= \lim_{M\rightarrow\infty}\lim_{t\rightarrow\infty} {\bf x}^{(M)}(t)=_{(a)}\lim_{t\rightarrow\infty} \lim_{M\rightarrow\infty}  {\bf x}^{(M)}(t) ={\bf x}^*,$$ where step $(a)$ is called the interchange of the limits.
\end{itemize}

Since these approaches are all based on the interchange of the limits and use the finite-time convergence (equality (\ref{eq:finitetime})) as the stepping stone, they are {\em indirect} methods of proving $$\lim_{M\rightarrow\infty}{\bf x}^{(M)}(\infty)={\bf x}^*.$$  Because of this reason, it is difficult to use these approaches to establish the rate of convergence of stationary distributions, and to provide bounds on the approximation error $\|{\bf x}^{(M)}(\infty)-{\bf x}^*\|$ for the finite-size system (i.e., for any fixed $M$).

This paper directly studies the convergence of the stationary distributions of CTMCs using Stein's method \cite{Ste_72,Ste_86}, which is a method to bound the distance of two probability distributions. Our use of Stein's method for the rate of convergence was inspired by the work by Braverman and Dai \cite{BraDai_15}, in which they developed a modular framework with three components for steady-state diffusion approximations and established the rate of convergence to diffusion models for $M/Ph/n+M$ queuing systems. The results in this paper also share similar spirit with the work by Gurvich \cite{Gur_14}, which establishes the rate of convergence of diffusion models for steady-state approximations for exponentially ergodic Markovian queues.

This paper focuses on mean-field models instead of diffusion models. Based on Stein's method, the paper identifies a fundamental connection between the perturbation theory for nonlinear systems and the convergence of mean-field models. The perturbation theory shows that for a stable nonlinear system with exponentially stable equilibrium point, the error of the first-order approximation of the nonlinear system is at the order of $O(\epsilon^2),$ where $\epsilon$ is the scaling factor of the perturbation. It turns out the mean-square difference between the stationary distribution of the $M$th CTMC and the equilibrium point of the mean-field model is related to the {\em cumulative} error (integrated over infinite time horizon) of using the first-order approximation of the mean-field model. After quantifying the cumulative error, the following results are established for finite-dimensional mean-field models.
\begin{itemize}
\item If the mean-field model is perfect (see definition in Section \ref{sec:stein}),  globally asymptotically stable and locally exponentially stable, then the stationary distributions of the CTMCs converges in the mean-square sense to the equilibrium point of the mean-field model with rate $O(1/M)$ (Theorem \ref{thm:main}), specifically,
\begin{equation}
E\left[\|{\bf x}^{(M)}(\infty)-{\bf x}^*\|^2\right]=O(\frac{1}{M}).\label{eq:rfc}
\end{equation}

\item If the mean-field model is not perfect, sufficient conditions that guarantee the convergence of the stationary distributions have been obtained in Corollary \ref{cor}.
\end{itemize}

We remark that these results are different from the celebrated law of large numbers for Markov chains established by Kurtz \cite{Kur_71,Kur_81}, where the convergence is established for sample paths of the CTMCs over a finite time interval, not for the stationary distributions of the CTMCs. The contributions of these results are two-fold: First, it provides a {\em direct} method of studying the convergence of stationary distributions to its mean-field limit. The method connects the convergence of CTMCs with the stability of the mean-field model. Note that the mean-field model is a deterministic system, so it is often easier to analyze than the CTMCs.  Second, the method quantifies the rate of convergence, and provides bounds on the approximation error when using the mean-field limit for approximating finite-size systems. Finally, recall that $$x^{(M)}_i(t)=\frac{1}{M}\sum_{m=1}^M {\mathbf 1}_{W_m^{(M)}(t)=i}$$ which is the average of $M$ Bernoulli random variables. The mean-square error in (\ref{eq:rfc}) is at the same order as the variance of the average of $M$ i.i.d. Beroulli random variables (according to law of large numbers). While it is obvious that ${\mathbf 1}_{W_m^{(M)}(t)=i}$ are not independent in the $M$th CTMC, the bound on the mean-square difference, however, provides an intuitive support to approximating a large-scale CTMC based on the i.i.d. assumption and mean-field models.

\section{Mean-Field Models}
\label{sec:stein}
Consider an $M$-dimensional continuous-time Markov chain ${\bf W}^{(M)}\in {\cal U}^M,$ where the superscript $M$ is the number of nodes (or called particles) in the system and ${\cal U}^M\subseteq {\bf R}^M$ is the state space of the CTMC. We assume ${\cal U}$ is a finite state space and the CTMC is irreducible. Without loss of generality, we assume ${\cal U}=\{1, \cdots, n\}.$
We further define $$X^{(M)}_i(t)=\sum_{m=1}^M {\mathbf 1}_{W_m^{(M)}(t)=i}$$ where $\mathbf 1$ is the indicator function, so $X^{(M)}_i(t)$ is the number of nodes in state $i$ at time $t.$ We further define 
$${\bf x}^{(M)}(t)=\frac{{\bf X}^{(M)}(t)}{M},$$ so $x^{(M)}_i(t)\in[0, 1]$ represents the {\em fraction} of nodes in state $i$ at time $t.$ In this paper, we assume ${\bf x}^{(M)}=\left\{{\bf x}^{(M)}(t), t\geq 0\right\}$ is an ($n$-dimensional) CTMC. We use ${\bf x}^{(M)}(\infty)$ to denote its stationary distribution.

Furthermore, we have a mean-field model described by the following autonomous dynamical system:
\begin{equation}\dot{\bf x}\triangleq \frac{d}{dt}x(t)=f({\bf x}(t))\quad {\bf x}(0)={\bf x}\hbox{ and } {\bf x}(t)\in{\cal D}\subseteq {[0, 1]}^n,\label{ori-sys-d}\end{equation} where $\cal D$ is a compact set. Here, we abuse the notation and use $\bf x$ to denote the initial condition, which simplifies the notation in the analysis later without causing too much confusion. Assume the system has a unique equilibrium point and let ${\bf x}^*$ denote the  equilibrium point. The key idea of the mean-field analysis is to use the solution of this deterministic dynamical system to approximate the behavior of the CTMC when $M$ is large; for example, use ${\bf x}^*$ to approximate ${\bf x}^{(M)}(\infty).$

Let $Q_{{\bf x}^{(M)}, {\bf y}^{(M)}}$ denote the transition rate of the CTMC from state ${\bf x}^M$ to state ${\bf y}^M.$ The sequence of CTMCs is called a density-dependent family of CTMCs if the normalized transition rate $q_{{\bf x}^{(M)}, {\bf y}^{(M)}}=\frac{1}{M}Q_{{\bf x}^{(M)}, {\bf y}^{(M)}}$ only depends on ${\bf x}^{(M)}$ and ${\bf y}^{(M)}$ but is independent of $M$ (see a detailed definition in \cite{Mit_96}). For a density-dependent family of CTMCs, the mean-field model can often be obtained by choosing $$f({\bf x})=\sum_{{\bf y}: {\bf y}\not={\bf x}} q_{{\bf x}, {\bf y}}\left({\bf y}-{\bf x}\right)$$ because $q_{{\bf x}, {\bf y}}$ is the transition rate from $\bf x$ to $\bf y$ and ${\bf y}-{\bf x}$ is the change of system state when such a transition occurs.

We next illustrate the idea using an SIS (susceptible-infected-susceptible) model with external infection source, which is a variation of the SIS model.

{\bf Example:} Let $W^{(M)}_m$ denote the state of an individual such that $W^{(M)}_m=0$ if the individual is susceptible and $W^{(M)}_m=1$ if the individual is infected. So $x^{(M)}_0$ is the fraction of susceptible individuals and $x^{(M)}_1$ is the fraction of infected individuals. We assume the recover time of an individual follows an exponential distribution with mean $1.$ Each infected node randomly selects an susceptible node after waiting for a random time that is exponentially distributed with mean $1/\beta$ and infects it. Each susceptible node, after it becomes susceptible, gets infected by an external infection source after a random time period that is exponentially distributed with mean $1/\alpha.$ Therefore,  $W^{(M)},$ ${\bf X}^{(m)}$ and ${\bf x}^{(M)}$ are CTMCs. Specifically, ${\bf x}^{(M)}$ has the following transition rates:
\begin{eqnarray*}
Q_{{\bf x}^{(M)}, {\bf y}^{(M)}}=\left\{
                                   \begin{array}{llll}
                                     M\alpha x_0^{(M)}+M\beta x_0^{(M)}x_1^{(M)}, & \hbox{ if } {\bf y}^{(M)}={\bf x}^{(M)} +\frac{1}{M}\left(
                                                 \begin{array}{c}
                                                   -1 \\
                                                   1 \\
                                                 \end{array}\right)\\
                                     Mx_1^{(M)}, & \hbox{ if } {\bf y}^{(M)}={\bf x}^{(M)} +\frac{1}{M}\left(
                                                 \begin{array}{c}
                                                   1 \\
                                                   -1 \\
                                                 \end{array}\right)\\
                                    -M\alpha x_0^{(M)}-M\beta x_0^{(M)}x_1^{(M)}-Mx_1^{(M)}, &\hbox{ if } {\bf y}^{(M)}={\bf x}^{(M)}\\
                                    0 & \hbox{ otherwise.}
                                   \end{array}
                                 \right.
\end{eqnarray*}

Note for a given $M,$ computing the stationary distribution of ${\bf x}^{(M)}$ is not easy because it has a large state space $$\left\{\frac{1}{M}, \frac{2}{M},\cdots, 1\right\}^2$$ and the transition rates are nonlinear functions of the state.

The SIS considered above is a density-dependent family, so we consider the following mean-field model
\begin{equation}
\left(
                                                 \begin{array}{c}
                                                   \dot{x}_0 \\
                                                   \dot{x}_1 \\
                                                 \end{array}
\right)=f({\bf x})=\sum_{{\bf y}: {\bf y}\not={\bf x}} q_{{\bf x}, {\bf y}}\left({\bf y}-{\bf x}\right)=(\alpha x_0+\beta x_0 x_1)\left(
                                                 \begin{array}{c}
                                                   -1 \\
                                                   1 \\
                                                 \end{array}
\right)+x_1\left(
                                                 \begin{array}{c}
                                                   1 \\
                                                   -1 \\
                                                 \end{array}
\right)
\end{equation}
To solve the mean-field model above, we notice that $x_0+x_1=1$ always holds, so we only need to consider
\begin{eqnarray*}
\dot{x}_0=-\alpha x_0-\beta x_0 (1-x_0)+(1-x_0).
\end{eqnarray*}
The equilibrium point can then be obtained by solving
\begin{eqnarray*}
0=-\alpha x_0-\beta x_0 (1-x_0)+(1-x_0).
\end{eqnarray*}
For example, if $\alpha=\beta=0.5,$ then
\begin{eqnarray*}
x^*_0=2-\sqrt{2} \quad\hbox{and}\quad x^*_1=\sqrt{2}-1,
\end{eqnarray*}  which can be used to approximate the fractions of susceptible and infected populations when $M$ is large, i.e., the stationary distribution of ${\bf x}^{(M)}.$ The simulation results of the fraction of infected population with $M=100,$ $1,000$ and $100,000$ are shown in Figure \ref{fig:sis}, from which the convergence of $x_0^{(M)}$ to $2-\sqrt{2}$ can be observed.

\begin{figure}[htb]
        \centering
 		\includegraphics[width=0.6\columnwidth]{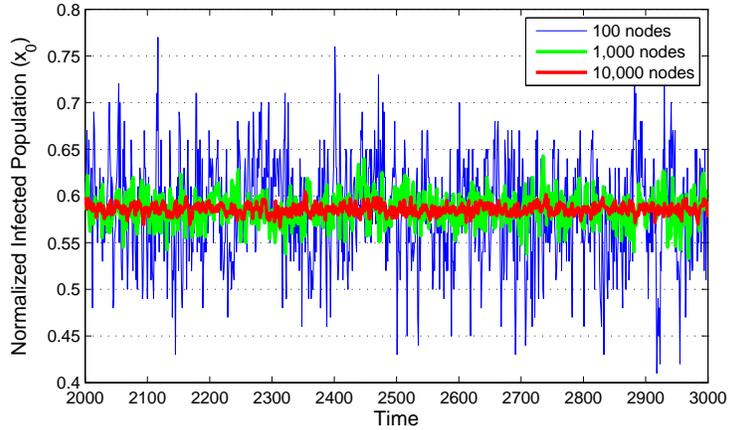}
        \caption{Simulation results of the normalized infected population with $M=100,$ $1,000,$ and $100,000.$ $\alpha=\beta=0.5$ in these simulations.  The CTMCs were simulated using the uniformization method. The time is the discrete-time (scaled with $M$).}\label{fig:sis}
\end{figure}

\hfill{$\square$}

\section{Stein's Method for the Rate of Convergence}

In this section, we study the convergence of the CTMCs to a mean-field model using Stein's method and the perturbation theory. Throughout this paper, $\|\cdot\|$ denotes the 2-norm, i.e., $\|{\bf x}\|=\sqrt{\sum_i x_i^2},$ and $|\cdot|$ denotes the absolute value. For two vectors ${\bf a}, {\bf b} \in {\bf R}^n,$ ${\bf a} \cdot {\bf b}$ is the dot product. Furthermore, $\triangledown g({\bf x})$ denotes the gradient of $g({\bf x}),$ and $\triangledown x_i(t,{\bf x})$ refers to differentiating with respect to the location $\bf x$, and $\dot{\bf x}$ is the derivative with respect to time.

Recall the mean-field model defined in (\ref{ori-sys-d}): 
\begin{equation*}\dot{\bf x}=f({\bf x}(t))\quad {\bf x}(0)={\bf x} \hbox{ and } {\bf x}(t)\in{\cal D}\subseteq [0,1]^n.\end{equation*} The mean-field model is said to be {\em globally asymptotically stable} if given any initial condition ${\bf x}(0)\in{\cal D}$ and any $\epsilon>0,$ there exists $t({\bf x}(0), \epsilon)$ such that $$\|{\bf x}(t)-{\bf x}^*\|\leq \epsilon \quad \forall t\geq t({\bf x}(0), \epsilon).$$ The mean-field model is said to be {\em locally exponentially stable} if there exist positive constants $\epsilon,$ $\alpha$ and $\kappa$ such that starting from any initial condition $\|{\bf x}(0)\|\leq \epsilon,$ \begin{equation*}
\|{\bf x}(t)-{\bf x}^*\|\leq \kappa \|{\bf x}(0)\|\exp\left(-\alpha t\right).
\end{equation*}

Let $g({\bf x})$ be the solution to the Poisson equation
\begin{equation}
\triangledown g({\bf x})\cdot \dot{\bf x}=\triangledown g({\bf x})\cdot f({\bf x})=\sum_{i=1}^n \left(x_i-x_i^*\right)^2\label{eq:poisson}
\end{equation}
Then, $$g({\bf x})=-\int_0^\infty \sum_i \left(x_i(t, {\bf x})-x_i^*\right)^2\, dt$$ when the integral is finite \cite{Bar_88,Got_91}, where ${\bf x}(t, {\bf x})$ is the trajectory of the dynamical system with ${\bf x}$ as the initial condition. The integral is finite when the mean-field model is asymptotically stable and locally exponentially stable. Note that $-g({\bf x})$ can be viewed as the cumulative square-deviation of the system state from the equilibrium point when the initial condition is $\bf x.$

Now let $G_{{\bf x}^{(M)}}$ denote the generator for the $M$th CTMC, then
\begin{eqnarray*}
G_{{\bf x}^{(M)}}g({\bf x}) &=& \sum_{{\bf y}: {\bf y}\not={\bf x}}Q_{{\bf x},{\bf y}}({\bf x})\left(g({\bf y})-g({\bf x})\right)\\
&=&M\sum_{{\bf y}: {\bf y}\not={\bf x}}q_{{\bf x},{\bf y}}({\bf x})\left(g({\bf y})-g({\bf x})\right).
\end{eqnarray*}
Since ${\bf x}^{(M)}$ is irreducible and has finite state space, ${\bf x}^{(M)}$ has a stationary distribution.  Initializing ${\bf x}^{(M)}(0)$ according to its stationary distribution, and using  $E_{{\bf x}^{(M)}}[\cdot]$ throughout to denote stationary expectation, we have
\begin{eqnarray}
E_{{\bf x}^{(M)}}\left[G_{{\bf x}^{(M)}}g({\bf x})\right]
=E_{{\bf x}^{(M)}}\left[M\sum_{{\bf y}: {\bf y}\not={\bf x}}q_{{\bf x},{\bf y}}({\bf x})\left(g({\bf y})-g({\bf x})\right)\right]=0,\label{eq:ss}
\end{eqnarray} where the subscript in the expectation indicates that the expectation is taken over the stationary distribution of ${\bf x}^{(M)}.$
Then by taking expectation of the Poisson equation (\ref{eq:poisson}) over the stationary distribution of ${\bf x}^{(M)}$ and then adding (\ref{eq:ss}) to the equation, we obtain
\begin{eqnarray}
&&E_{{\bf x}^{(M)}}\left[\sum_{i=1}^n \left(x_i-x_i^*\right)^2\right]\nonumber\\
&=&E_{{\bf x}^{(M)}}\left[\triangledown g({\bf x})\cdot f({\bf x})-M\sum_{{\bf y}: {\bf y}\not={\bf x}}q_{{\bf x},{\bf y}}\left(g({\bf y})-g({\bf x})\right)\right]\nonumber\\
\end{eqnarray}
Now adding and subtracting $\triangledown g({\bf x})\cdot\sum_{{\bf y}: {\bf y}\not={\bf x}}q_{{\bf x},{\bf y}}M({\bf y}-{\bf x})$ yields
\begin{eqnarray}
&&E_{{\bf x}^{(M)}}\left[\sum_{i=1}^n \left(x_i-x_i^*\right)^2\right]\nonumber\\
&=&E_{{\bf x}^{(M)}}\left[\triangledown g({\bf x})\cdot f({\bf x})-\triangledown g({\bf x})\cdot\sum_{{\bf y}: {\bf y}\not={\bf x}}q_{{\bf x},{\bf y}}M({\bf y}-{\bf x})- M\sum_{{\bf y}: {\bf y}\not={\bf x}}q_{{\bf x},{\bf y}}\left(g({\bf y})-g({\bf x})-\triangledown g({\bf x})\cdot({\bf y}-{\bf x})\right)\right]\nonumber\\
&=&E_{{\bf x}^{(M)}}\left[\triangledown g({\bf x})\cdot \left(f({\bf x})-\sum_{{\bf y}: {\bf y}\not={\bf x}}q_{{\bf x},{\bf y}}M({\bf y}-{\bf x})\right)- \sum_{{\bf y}: {\bf y}\not={\bf x}}q_{{\bf x},{\bf y}}M\left(g({\bf y})-g({\bf x})-\triangledown g({\bf x})\cdot({\bf y}-{\bf x})\right)\right].\label{eq:stein}
\end{eqnarray}

From the equality above, {\em intuitively}, that $E_{{\bf x}^{(M)}}\left[\sum_{i=1}^n \left(x_i-x_i^*\right)^2\right]$ converges to zero as $M\rightarrow\infty$ can be established if the followings are true:
\begin{itemize}
\item Bounded gradient of $g({\bf x}):$ $\left\|\triangledown g({\bf x})\right\|$ is bounded by a constant independent of $M.$

\item Convergence of the generator: $$\lim_{M\rightarrow\infty}E_{{\bf x}^{(M)}}\left[\left\|f({\bf x})-\sum_{{\bf y}: {\bf y}\not={\bf x}}q_{{\bf x},{\bf y}}M({\bf y}-{\bf x})\right\|\right]=0.$$

\item Bounded transition-rate of the CTMC: $E_{{\bf x}^{(M)}}\left[ \sum_{{\bf y}: {\bf y}\not={\bf x}}q_{{\bf x},{\bf y}}\right]$ is bounded.

\item Diminishing first-order approximation error: $$\left\|g({\bf y})-g({\bf x})-\triangledown g({\bf x})\cdot({\bf y}-{\bf x})\right\|=O\left(\frac{1}{M^2}\right).$$ Note that $g({\bf x})+\triangledown g({\bf x})\cdot({\bf y}-{\bf x})$ is the first-order Taylor approximation of $g({\bf y}).$
\end{itemize}

For many CTMCs and the associated mean-field models, the first three conditions mentioned above can be easily verified. In the following theorem, we will prove that the last condition holds when the mean-field model is globally asymptotically stable and locally exponentially stable (see inequality (\ref{eq:order})), and then establish the rate of convergence based on that. The following theorem presents the main result of this paper. 

\begin{theorem} The stationary distributions of the CTMCs (${\bf x}^{(M)}(\infty)$), defined in Section \ref{sec:stein}, converge to the equilibrium point (${\bf x}^*$) of the mean-field model (\ref{ori-sys-d})  in the mean-square sense with rate $1/M,$ i.e.,
$$E_{{\bf x}^{(M)}}\left[\sum_{i=1}^n \left(x_i-x_i^*\right)^2\right]=O\left(\frac{1}{M}\right)$$ when the following conditions hold:
\begin{itemize}
\item {\bf Bounded transition-rate condition:} There exists a constant $c>0$ independent of $M$ such that $$E_{{\bf x}^{(M)}}\left[ \sum_{{\bf y}: {\bf y}\not={\bf x}}q_{{\bf x},{\bf y}}\right]\leq c.$$

\item {\bf Bounded state transition condition:} There exists a constant $\tilde{c}$ independent of $M$ such that $\|{\bf x}-{\bf y}\|\leq \frac{\tilde{c}}{M}$ for any $\bf x$ and $\bf{y}$ such that $q_{{\bf x}, {\bf y}}\not=0.$

\item {\bf Perfect mean-field model condition:} The mean-field model (\ref{ori-sys-d}) is given by $$f({\bf x})=\sum_{{\bf y}: {\bf y}\not={\bf x}}q_{{\bf x},{\bf y}}M({\bf y}-{\bf x})  \quad \forall\ {\bf x}.$$

\item {\bf Partial derivative condition:} The function $f({\bf x})$ is twice continuously differentiable. 

\item {\bf Stability condition:} The mean-field model is globally asymptotically stable and is locally exponentially stable.
\end{itemize}
\label{thm:main}
\end{theorem}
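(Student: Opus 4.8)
The plan is to reduce Theorem~\ref{thm:main} to a single deterministic statement — that the Poisson solution $g$ is twice continuously differentiable on a neighbourhood of $\cal D$ with a Hessian bounded independently of $M$ — and then to establish that statement through the variational equations of the mean-field flow. Under the \textbf{perfect mean-field model condition} the first term on the right of (\ref{eq:stein}) vanishes, leaving
\begin{equation*}
E_{{\bf x}^{(M)}}\left[\sum_{i=1}^n (x_i-x_i^*)^2\right]=-E_{{\bf x}^{(M)}}\left[\sum_{{\bf y}:{\bf y}\neq{\bf x}} q_{{\bf x},{\bf y}}\,M\left(g({\bf y})-g({\bf x})-\triangledown g({\bf x})\cdot({\bf y}-{\bf x})\right)\right].
\end{equation*}
If $g$ is $C^2$ near ${\cal D}$, Taylor's theorem with Lagrange remainder bounds the summand by $\tfrac12\|{\bf y}-{\bf x}\|^2\sup_{\hat{\bf x}\in{\cal D}}\|\triangledown^2 g(\hat{\bf x})\|$; the \textbf{bounded state transition condition} gives $\|{\bf y}-{\bf x}\|^2\le\tilde c^2/M^2$, so this is $O(1/M^2)$, the prefactor $M$ makes it $O(1/M)$, and the \textbf{bounded transition-rate condition} controls the sum, yielding $E_{{\bf x}^{(M)}}[\sum_i(x_i-x_i^*)^2]\le\tfrac12\,\tilde c^2 c\,M^{-1}\sup_{{\cal D}}\|\triangledown^2 g\|$. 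Hence the whole theorem follows once $C_g:=\sup_{{\bf x}\in{\cal D}}\|\triangledown^2 g({\bf x})\|<\infty$ — and this finite constant is precisely the ``perturbation-theory'' content: a globally, exponentially attracting mean-field model forces the first-order Taylor error of $g$ at scale $\epsilon=\|{\bf y}-{\bf x}\|$ to be genuinely $O(\epsilon^2)$.

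Writing $\Phi(t,{\bf x})$ for the trajectory ${\bf x}(t,{\bf x})$, we have $g({\bf x})=-\int_0^\infty\|\Phi(t,{\bf x})-{\bf x}^*\|^2\,dt$, and differentiating twice in ${\bf x}$ under the integral (with $D_{\bf x}\Phi$ and $D_{\bf x}^2\Phi$ the first two derivatives of $\Phi$ in its initial condition),
\begin{equation*}
\triangledown^2 g({\bf x})=-2\int_0^\infty\left((D_{\bf x}\Phi(t,{\bf x}))^{\top}D_{\bf x}\Phi(t,{\bf x})+\sum_{i=1}^n\bigl(\Phi_i(t,{\bf x})-x_i^*\bigr)\,D_{\bf x}^2\Phi_i(t,{\bf x})\right)dt.
\end{equation*}
This differentiation is legitimate, and bounds $C_g$, as soon as there are constants $K,\lambda>0$ \emph{independent of} ${\bf x}\in{\cal D}$ with $\|\Phi(t,{\bf x})-{\bf x}^*\|\le Ke^{-\lambda t}$, $\|D_{\bf x}\Phi(t,{\bf x})\|\le Ke^{-\lambda t}$ and $\|D_{\bf x}^2\Phi(t,{\bf x})\|\le Ke^{-\lambda t}$, because then the integrand is dominated by an integrable multiple of $e^{-2\lambda t}$ uniformly on ${\cal D}$ (the same estimates also show $g\in C^2$ there). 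So the theorem comes down to these three uniform exponential estimates on the flow and its first two derivatives with respect to the initial condition.

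These estimates are where the \textbf{stability} and \textbf{partial derivative} conditions enter, and they are the core of the argument. Local exponential stability forces $A:=Df({\bf x}^*)$ to be Hurwitz, so on a suitably small neighbourhood $N$ of ${\bf x}^*$ the solutions of the first variational equation $\dot\Psi=Df(\Phi(t,{\bf x}))\,\Psi$, $\Psi(0)=I$, decay like $e^{-\lambda t}$; global asymptotic stability, with ${\cal D}$ compact and forward-invariant under the flow, supplies a single time $T$ — uniform over ${\bf x}\in{\cal D}$ — after which $\Phi(t,{\bf x})\in N$. For $t\le T$ one has only the crude Gronwall bound $\|\Psi(t)\|\le e^{Lt}$ with $L:=\sup_{{\cal D}}\|Df\|<\infty$ (finite since $f\in C^2$ and ${\cal D}$ is compact); differentiating the flow identity $\Phi(t,{\bf x})=\Phi(t-T,\Phi(T,{\bf x}))$ in ${\bf x}$ splices the two regimes into $\|D_{\bf x}\Phi(t,{\bf x})\|\le Ke^{-\lambda t}$ for all $t\ge0$, uniformly on ${\cal D}$, and the same splicing, fed by local exponential stability, bounds $\|\Phi(t,{\bf x})-{\bf x}^*\|$. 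Finally $\Theta:=D_{\bf x}^2\Phi$ satisfies the second variational equation $\dot\Theta=Df(\Phi(t,{\bf x}))\,\Theta+D^2f(\Phi(t,{\bf x}))[\Psi,\Psi]$, $\Theta(0)=0$, so variation of constants together with the decay of $\Psi$ and $\sup_{{\cal D}}\|D^2f\|<\infty$ gives $\|\Theta(t)\|\le Ke^{-\lambda t}$. I expect the genuinely delicate points to be (a) deducing Hurwitz-ness of $A$ from the bare definition of local exponential stability for a merely $C^2$ vector field, and (b) the uniformity of the entry time $T$ over the compact set ${\cal D}$, for which I would invoke a converse Lyapunov function for the globally asymptotically stable equilibrium; granting (a) and (b), everything else is routine Gronwall and variation-of-constants bookkeeping.
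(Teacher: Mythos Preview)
Your proposal is correct and shares the same skeleton as the paper --- Stein's identity (\ref{eq:stein}) reduces everything to bounding the second-order Taylor remainder of $g$, and that bound comes from exponential decay of the flow and its variational equations --- but the technical work is organized differently. The paper never asserts $g\in C^2$ or bounds $\triangledown^2 g$; instead it expands $g({\bf y})-g({\bf x})-\triangledown g({\bf x})\cdot({\bf y}-{\bf x})$ as a time integral, introduces the finite-difference error ${\bf e}(t)=\Phi(t,{\bf y})-\Phi(t,{\bf x})-D_{\bf x}\Phi(t,{\bf x})({\bf y}-{\bf x})$, rewrites the integrand algebraically in terms of ${\bf e}(t)$ and $(D_{\bf x}\Phi(t,{\bf x})({\bf y}-{\bf x}))^2$, and then devotes Section~\ref{sec:pert} to proving $\int_0^\infty\|{\bf e}(t)\|\,dt=O(\|{\bf y}-{\bf x}\|^2)$ via a Lyapunov argument on ${\bf e}(t)$ itself. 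Your route --- bound $D_{\bf x}^2\Phi$ via the second variational equation and variation of constants, then invoke Taylor with Lagrange remainder --- is the infinitesimal version of the same computation (the paper's ${\bf e}(t)/\epsilon^2$ converges to $\tfrac12 D_{\bf x}^2\Phi[{\bf z},{\bf z}]$ as $\epsilon\to0$) and is arguably cleaner: one uniform Hessian bound replaces a pairwise estimate for each $({\bf x},{\bf y})$. The ``delicate points'' you flag are handled exactly as you anticipate: for (a) the paper cites Corollary~4.3 and Theorem~4.15 of Khalil to pass from local exponential stability to Hurwitzness of $Df({\bf x}^*)$; for (b) the paper splits the time integral at a finite $t'$ after which every trajectory from $\cal D$ lies in the exponentially stable neighbourhood, relying on compactness of $\cal D$ just as you do.
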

\begin{remark}
The first four conditions are easy to verify, so only the stability condition requires nontrivial work. Since a dynamical system has an exponentially stable equilibrium point if and only if the linearized system (at the equilibrium) is exponentially stable (see Theorem 4.15 in \cite{Kha_01}),  the local exponential stability can be verified by calculating the eigenvalues of the state matrix of the mean-field model. When the parameters of the mean-field model are given, the local exponential stability can be numerically verified. The global asymptotical stability in general is studied using the Lyapunov theorem.
\end{remark}

\begin{remark}
It is worth to pointing out that if the mean-field model is unstable but the perfect mean-field model assumption holds. Kurtz's theorem indicates that the sample paths of the CTMCs converge to the trajectory of the mean-field model for any finite time interval, which implies that the CTMCs are ``unstable'' as well.
\end{remark}

\begin{proof} We first prove the theorem assuming the mean-field model is globally exponentially stable, and then extend it to the general case.
Under the perfect mean-field model assumption, equation (\ref{eq:stein}) becomes
\begin{eqnarray*}
&&E_{{\bf x}^{(M)}}\left[\sum_{i=1}^n \left(x_i-x_i^*\right)^2\right]\\
&=&E_{{\bf x}^{(M)}}\left[- \sum_{{\bf y}: {\bf y}\not={\bf x}}q_{{\bf x},{\bf y}}M\left(g({\bf y})-g({\bf x})-\triangledown g({\bf x})\cdot({\bf y}-{\bf x})\right)\right],
\end{eqnarray*}
where $$g({\bf x})=-\int_0^\infty \sum_i \left(x_i(t, {\bf x})-x_i^*\right)^2\, dt.$$  We next focus on the following term,
\begin{eqnarray}
&&-\left(g({\bf y})-g({\bf x})-\triangledown g({\bf x})\cdot({\bf y}-{\bf x})\right)\\
&=&\left.\int_0^\infty \sum_i \left(\left(x_i(t, {\bf y})-x_i^*\right)^2-\left(x_i(t, {\bf x})-x_i^*\right)^2-2\left(x_i(t, {\bf x})-x_i^*\right)\triangledown x_i(t, {\bf x})\cdot({\bf y}-{\bf x})\right)\,dt\right.\label{eq:2nd-deri}
\end{eqnarray}
Note that we exchanged the order of integration and differentiation for the third term. This is can be done because
$$\int_0^\infty 2\left(x_i(t, {\bf x})-x_i^*\right)\triangledown x_i(t, {\bf x})\cdot({\bf y}-{\bf x})\,dt$$ is finite, which can be proved using the fact that both $\left(x_i(t, {\bf x})-x_i^*\right)$ and $\|\triangledown x_i(t, {\bf x})\|$ decay exponentially fast to zero as $t$ increases (apply inequalities (\ref{eq:exconv}) and (\ref{eq:exconv-1}) with ${\bf z}={\bf 1}$), and the fact that $\|{\bf y}-{\bf x}\|$ is bounded due to the bounded state transition condition.

We next define \begin{equation*}
e_i(t)=x_i(t, {\bf y})-x_i(t, {\bf x})-\triangledown x_i(t, {\bf x})\cdot({\bf y}-{\bf x}),
\end{equation*} i.e.,  
\begin{equation*}
x_i(t, {\bf y})=e_i(t)+x_i(t, {\bf x})+\triangledown x_i(t, {\bf x})\cdot({\bf y}-{\bf x}),
\end{equation*}
so
\begin{eqnarray*}
&&\left(x_i(t, {\bf y})-x_i^*\right)^2-\left(x_i(t, {\bf x})-x_i^*\right)^2-2\left(x_i(t, {\bf x})-x_i^*\right)\triangledown x_i(t, {\bf x})\cdot({\bf y}-{\bf x})\\
&=&\left(e_i(t)+x_i(t, {\bf x})-x_i^*+\triangledown x_i(t, {\bf x})\cdot({\bf y}-{\bf x})\right)^2-\left(x_i(t, {\bf x})-x_i^*\right)^2-2\left(x_i(t, {\bf x})-x_i^*\right)\triangledown x_i(t, {\bf x})\cdot({\bf y}-{\bf x})\\
&=&e^2_i(t)+\left(\triangledown x_i(t, {\bf x})\cdot({\bf y}-{\bf x})\right)^2+2e_i(t)\triangledown x_i(t, {\bf x})\cdot({\bf y}-{\bf x})+2e_i(t)\left(x_i(t, {\bf x})-x_i^*\right)\\
&=&e_i(t)\left(e_i(t)+2\triangledown x_i(t, {\bf x})\cdot({\bf y}-{\bf x})+2\left(x_i(t, {\bf x})-x_i^*\right)\right)+\left(\triangledown x_i(t, {\bf x})\cdot({\bf y}-{\bf x})\right)^2.
\end{eqnarray*}

According to the perturbation theory, in particular, inequality (\ref{eq:error}), when the system is exponentially stable, we have that $$|e_i(t)|\leq \|{\bf e}(t)\|=O\left(\frac{1}{M^2}\right).$$ According to the bounded state transition condition, $$\|{\bf x}-{\bf y}\|\leq \frac{\tilde{c}}{M}.$$ Furthermore, both  $\triangledown x_i(t, {\bf x})$ and
$x_i(t, {\bf x})$ are bounded (see inequalities (\ref{eq:exconv}) and (\ref{eq:exconv-1})) by constants independent of $M$ and $t.$ Therefore, we can choose  a constant $b$ and a sufficiently large $\tilde{M}$ such that for any $M\geq \tilde{M},$
$$\left|e_i(t)+2\triangledown x_i(t, {\bf x})\cdot({\bf y}-{\bf x})+2\left(x_i(t, {\bf x})-x_i^*\right)\right|\leq b,$$ which implies that
\begin{eqnarray}
\left|g({\bf y})-g({\bf x})-\triangledown g({\bf x})\cdot({\bf y}-{\bf x})\right|
&\leq& b{\int}_0^\infty \sum_i|{e}_i(t)|\,dt+\int_0^\infty \sum_i \left(\triangledown x_i(t, {\bf x})\cdot({\bf y}-{\bf x})\right)^2\,dt,\\
&\leq&  b\sqrt{n} {\int}_0^\infty \|{\bf e}(t)\|\,dt+\int_0^\infty \sum_i \left(\triangledown x_i(t, {\bf x})\cdot({\bf y}-{\bf x})\right)^2\,dt,
\label{eq:boundong} 
\end{eqnarray}  where the last inequality is based on the following relation between 1-norm and 2-norm: $\sum_i|e_i(t)|\leq \sqrt{n}\|{\bf e}(t)\|.$ In Section \ref{sec:pert}, we will show that under the exponential stability assumption, $${\int}_0^\infty \|{\bf e}(t)\|\,dt=O(1/M^2) \quad \hbox{see inequality (\ref{eq:accumulated-error})}.$$

From the bounded state transition condition, $\|{\bf  y}-{\bf x}\|^2\leq \frac{\tilde{c}^2}{M^2}.$ Therefore, 
\begin{eqnarray*}
\int_0^\infty \sum_i \left(\triangledown x_i(t, {\bf x})\cdot({\bf y}-{\bf x})\right)^2\,dt\leq \frac{\tilde{c}^2}{M^2} \int_0^\infty \sum_i \left\|\triangledown x_i(t, {\bf x})\right\|^2\,dt.
\end{eqnarray*}
Now according to inequality (\ref{eq:exconv-1}), there exist positive constants $b_1$ and $b_2,$ both independent $M,$ such that
\begin{eqnarray*}
&&\int_0^\infty \sum_i \left(\triangledown x_i(t, {\bf x})\cdot({\bf y}-{\bf x})\right)^2\,dt\leq \frac{\tilde{c}^2}{M^2}\int_0^\infty b_1\exp(-b_2t)\,dt\leq \frac{b_1}{b_2}\frac{1}{M^2}.
\end{eqnarray*} Therefore, we can conclude that
\begin{eqnarray}
\left|g({\bf y})-g({\bf x})-\triangledown g({\bf x})\cdot({\bf y}-{\bf x})\right|=O\left(\frac{1}{M^2}\right),
\label{eq:order}
\end{eqnarray} which implies that
\begin{eqnarray}
E_{{\bf x}^{(M)}}\left[\sum_{i=1}^n \left(x_i-x_i^*\right)^2\right]=O\left(\frac{1}{M}\right) E_{{\bf x}^{(M)}}\left[ \sum_{{\bf y}: {\bf y}\not={\bf x}}q_{{\bf x},{\bf y}}\right].\label{eq:bound2}
\end{eqnarray}
Finally, using the bounded transition rate condition, we conclude
\begin{eqnarray}
E_{{\bf x}^{(M)}}\left[\sum_{i=1}^n \left(x_i-x_i^*\right)^2\right]=O\left(\frac{1}{M}\right).\label{eq:bound2}
\end{eqnarray}

Now consider the case that the mean-field model is not globally exponentially stable, but is globally asymptotically stable and locally exponentially stable. Recall that ${\cal D}\subseteq [0,1]^n$ is compact. According to the definition of global asymptotic stability (Definition 4.4 in \cite{Kha_01}), given any $\epsilon>0,$ there exists a finite time $t'$ such that $$\|{\bf x}(t)\|\leq \epsilon$$ for any $t\geq t'.$  For any finite $t,$ following a similar analysis as in Section \ref{sec:pert} (or Section 10.1 in \cite{Kha_01}), $\|{\bf e}(t, {\bf x})\|=O(1/M^2)$ holds\footnote{This holds without exponential stability, but the constant in $O(1/M^2)$ may be a function of $t$ if the system is not exponentially stable.} Therefore, we can bound the term (\ref{eq:2nd-deri}) by separating the integration into two intervals: from 0 to $t',$ and from $t'$ to $\infty,$ where $t'$ is chosen such that ${\bf x}(t)$ converges exponentially to the equilibrium point after $t'.$  Since $\|{\bf e}(t', {\bf x})\|=O(1/M^2),$ the analysis above applies to the integration over $[t',\infty).$ Hence, the result holds.
\end{proof}

The theorem above requires a {\em perfect mean-field model} and {\em bounded state transitions}.  Both conditions can be relaxed, but the rate of convergence will be different.
\begin{cor}
Assume partial derivative condition and the stability condition in Theorem \ref{thm:main} hold.  The stationary distributions of the CTMCs converge (in the mean square sense) to equilibrium point of the mean-field model, i.e.,
$$\lim_{M\rightarrow\infty} E_{{\bf x}^{(M)}}\left[\sum_{i=1}^n \left(x_i-x_i^*\right)^2\right]=0$$ when the following conditions also hold:
\begin{eqnarray}
\lim_{M\rightarrow\infty}E_{{\bf x}^{(M)}}\left[\left\|f({\bf x})-\sum_{{\bf y}: {\bf y}\not={\bf x}}q_{{\bf x},{\bf y}}M({\bf y}-{\bf x})\right\|\right]=0.\label{eq:add-1}\\
\lim_{M\rightarrow\infty}E_{{\bf x}^{(M)}}\left[\sum_{{\bf y}: {\bf y}\not={\bf x}}q_{{\bf x},{\bf y}}M\|{\bf y}-{\bf x}\|^2\right]=0\label{eq:add-2}\\
\lim_{M\rightarrow\infty}\max_{{\bf x}, {\bf y}: q_{{\bf x}, {\bf y}}>0}\|{\bf y}-{\bf x}\|=0\label{eq:add-3}
\end{eqnarray}
We say that the mean-field model is asymptotically accurate when condition (\ref{eq:add-1}) holds, which replaces the perfect mean-field model condition. Conditions (\ref{eq:add-2}) and (\ref{eq:add-3}) replace the bounded state transition condition. \label{cor}
\end{cor}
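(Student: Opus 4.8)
The plan is to return to the exact identity (\ref{eq:stein}), which does not use the perfect mean-field model assumption, and to show that both terms on its right-hand side vanish as $M\to\infty$. Write that right-hand side as $A_M-B_M$ with $A_M=E_{{\bf x}^{(M)}}\bigl[\triangledown g({\bf x})\cdot(f({\bf x})-\sum_{{\bf y}\neq{\bf x}}q_{{\bf x},{\bf y}}M({\bf y}-{\bf x}))\bigr]$ and $B_M=E_{{\bf x}^{(M)}}\bigl[\sum_{{\bf y}\neq{\bf x}}q_{{\bf x},{\bf y}}M(g({\bf y})-g({\bf x})-\triangledown g({\bf x})\cdot({\bf y}-{\bf x}))\bigr]$. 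For $A_M$, I would first note that the stability condition alone forces $\|\triangledown g({\bf x})\|$ to be bounded uniformly on the compact set ${\cal D}$: differentiating $g({\bf x})=-\int_0^\infty\sum_i(x_i(t,{\bf x})-x_i^*)^2\,dt$ gives $\triangledown g({\bf x})=-2\int_0^\infty\sum_i(x_i(t,{\bf x})-x_i^*)\triangledown x_i(t,{\bf x})\,dt$, and inequalities (\ref{eq:exconv}) and (\ref{eq:exconv-1}) make the integrand decay exponentially in $t$, uniformly over ${\cal D}$. Then Cauchy--Schwarz gives $|A_M|\le C\,E_{{\bf x}^{(M)}}[\|f({\bf x})-\sum_{{\bf y}\neq{\bf x}}q_{{\bf x},{\bf y}}M({\bf y}-{\bf x})\|]$, which tends to $0$ by the asymptotic-accuracy condition (\ref{eq:add-1}).

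The real work is the bound on $B_M$. I would aim for the pointwise estimate $|g({\bf y})-g({\bf x})-\triangledown g({\bf x})\cdot({\bf y}-{\bf x})|\le C\|{\bf y}-{\bf x}\|^2$, valid for every transition pair with $q_{{\bf x},{\bf y}}>0$ as soon as $M$ is large enough, with $C$ independent of $M$, ${\bf x}$ and ${\bf y}$. Granting it, $|B_M|\le C\,E_{{\bf x}^{(M)}}[\sum_{{\bf y}\neq{\bf x}}q_{{\bf x},{\bf y}}M\|{\bf y}-{\bf x}\|^2]\to 0$ by condition (\ref{eq:add-2}), and combining with the bound on $A_M$ proves the corollary. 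To obtain the pointwise estimate I would replay the computation in the proof of Theorem \ref{thm:main} — the expansion of the integrand of (\ref{eq:2nd-deri}) in terms of $e_i(t)=x_i(t,{\bf y})-x_i(t,{\bf x})-\triangledown x_i(t,{\bf x})\cdot({\bf y}-{\bf x})$ — but carry $\|{\bf y}-{\bf x}\|$ symbolically instead of substituting $\tilde c/M$. Condition (\ref{eq:add-3}) gives $\max_{q_{{\bf x},{\bf y}}>0}\|{\bf y}-{\bf x}\|\le\delta$ for any prescribed $\delta$ once $M$ is large, which is exactly what keeps the bracketed factor $|e_i(t)+2\triangledown x_i(t,{\bf x})\cdot({\bf y}-{\bf x})+2(x_i(t,{\bf x})-x_i^*)|$ uniformly bounded (the first two terms small, the last bounded by (\ref{eq:exconv}) and compactness of ${\cal D}$), and the perturbation argument of Section \ref{sec:pert}, run with $\|{\bf y}-{\bf x}\|$ in place of $\tilde c/M$, then yields $\int_0^\infty\|{\bf e}(t)\|\,dt=O(\|{\bf y}-{\bf x}\|^2)$ and $\int_0^\infty\sum_i(\triangledown x_i(t,{\bf x})\cdot({\bf y}-{\bf x}))^2\,dt=O(\|{\bf y}-{\bf x}\|^2)$, giving the claimed bound.

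I expect the main obstacle to be making the constant $C$ uniform in $M$, ${\bf x}$ and ${\bf y}$. The tension is that exponential stability is only local, so for initial conditions far from ${\bf x}^*$ the estimate $\|{\bf e}(t,{\bf x})\|=O(\|{\bf y}-{\bf x}\|^2)$ a priori carries a $t$-dependent constant, and $g$ itself is an improper integral. Both are resolved exactly as in the proof of Theorem \ref{thm:main}: split $\int_0^\infty=\int_0^{t'}+\int_{t'}^\infty$, where global asymptotic stability and compactness of ${\cal D}$ let a single finite $t'$ serve all ${\bf x}\in{\cal D}$ by driving the trajectory into the exponentially stable neighbourhood; on $[0,t']$ the constant is finite because $t'$ is finite and ${\cal D}$ compact, and on $[t',\infty)$ exponential decay makes the bound integrable. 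It is worth emphasizing that (\ref{eq:add-3}) is used only qualitatively — to push $M$ past the threshold needed for uniform boundedness of the bracketed factor — whereas the actual smallness driving $B_M\to0$ comes from the factor $M\|{\bf y}-{\bf x}\|^2$ controlled by (\ref{eq:add-2}).
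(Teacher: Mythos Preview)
Your proposal is correct and follows essentially the same route as the paper's own proof: the same split of (\ref{eq:stein}) into a ``gradient'' term controlled by (\ref{eq:add-1}) and a ``Taylor-remainder'' term controlled by (\ref{eq:add-2}), with (\ref{eq:add-3}) used only to make the constant in the $O(\|{\bf y}-{\bf x}\|^2)$ bound on $\int_0^\infty\|{\bf e}(t)\|\,dt$ uniform (the paper phrases this as killing the extra factor $\exp(\alpha_3'\|{\bf y}-{\bf x}\|)$ in (\ref{eq:accumulated-error})). Your discussion of the $[0,t']\cup[t',\infty)$ splitting to handle the merely local exponential stability also mirrors the paper's treatment.
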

\begin{proof} First recall that we have
\begin{eqnarray}
&&E_{{\bf x}^{(M)}}\left[\sum_{i=1}^n \left(x_i-x_i^*\right)^2\right]\nonumber\\
&=&E_{{\bf x}^{(M)}}\left[\triangledown g({\bf x})\cdot \left(f({\bf x})-\sum_{{\bf y}: {\bf y}\not={\bf x}}q_{{\bf x},{\bf y}}M({\bf y}-{\bf x})\right)- \sum_{{\bf y}: {\bf y}\not={\bf x}}q_{{\bf x},{\bf y}}M\left(g({\bf y})-g({\bf x})-\triangledown g({\bf x})\cdot({\bf y}-{\bf x})\right)\right]\nonumber\\
&\leq& \left(\max_{{\bf x}} \left\|\triangledown g({\bf x})\right\|\right) E_{{\bf x}^{(M)}}\left[\left\|f({\bf x})-\sum_{{\bf y}: {\bf y}\not={\bf x}}q_{{\bf x},{\bf y}}M({\bf y}-{\bf x})\right\|\right]+ \label{eq:gen}\\
&&E_{{\bf x}^{(M)}}\left[\sum_{{\bf y}: {\bf y}\not={\bf x}}q_{{\bf x},{\bf y}}M\left|g({\bf y})-g({\bf x})-\triangledown g({\bf x})\cdot({\bf y}-{\bf x})\right|\right].\label{eq:pert}
\end{eqnarray}

By choosing ${\bf z}={\bf 1}$ in Section \ref{sec:pert}, it is easy to verify according to inequality (\ref{eq:exconv-1})  that $\left(\max_{{\bf x}} \left\|\triangledown g({\bf x})\right\|\right)$ is upper bounded by a constant independent of $M.$ Therefore, under condition (\ref{eq:add-1}), $(\ref{eq:gen})\rightarrow 0$ as $M\rightarrow \infty.$

A careful examination of inequality (\ref{eq:accumulated-error}) shows that
\begin{equation*}
\int_0^\infty \|{\bf e}(t)\|\,dt =O\left(\|{\bf y}-{\bf x}\|^2\exp\left(\alpha_3^\prime \|{\bf y}-{\bf x}\|\right)\right).
\end{equation*}  So under condition (\ref{eq:add-3}), we have
\begin{equation*}
\int_0^\infty \|{\bf e}(t)\|\,dt =O\left(\|{\bf y}-{\bf x}\|^2\right).
\end{equation*}
When condition (\ref{eq:add-3}) holds, following the analysis that leads to inequality (\ref{eq:boundong}), we can again show that there exists a constant $\tilde{b}$ independent $M$ such that
\begin{eqnarray*}
\left|g({\bf y})-g({\bf x})-\triangledown g({\bf x})\cdot({\bf y}-{\bf x})\right|
\leq  \tilde{b}\sqrt{n} {\int}_0^\infty \|{\bf e}(t)\|\,dt+\int_0^\infty \sum_i \left(\triangledown x_i(t, {\bf x})\cdot({\bf y}-{\bf x})\right)^2\,dt.
\end{eqnarray*}
According to inequality (\ref{eq:exconv-1}), we also have
\begin{eqnarray*}
\int_0^\infty \sum_i \left(\triangledown x_i(t, {\bf x})\cdot({\bf y}-{\bf x})\right)^2\,dt=O\left(\|{\bf y}-{\bf x}\|^2\right).
\end{eqnarray*}
Therefore, we have
\begin{eqnarray*}
(\ref{eq:pert})=O\left(E_{{\bf x}^{(M)}}\left[\sum_{{\bf y}: {\bf y}\not={\bf x}}q_{{\bf x},{\bf y}}M\|{\bf y}-{\bf x}\|^2\right]\right),
\end{eqnarray*} which converges to zero according to condition (\ref{eq:add-2}). Hence, the  corollary holds.
\end{proof}
\begin{remark}
When the mean-field model is asymptotically accurate, the convergence rate depends on the convergence rates of (\ref{eq:add-1}) and (\ref{eq:add-2}).
\end{remark}

{\bf Example:}  Let us first go back to the SIS model introduced in Section \ref{sec:stein}. A closed-form solution can be obtained for $x_0(t).$ Again assume $\alpha=\beta=0.5,$ then the solution of the ordinary differential equation is
$$x_0(t)=\frac{e^{-\sqrt{2}t}(2+\sqrt{2})\left(x_0(0)-2+\sqrt{2}\right)+(2-\sqrt{2})x_0(0)-2}{-e^{-\sqrt{2}t}\left(x_0(0)-2+\sqrt{2}\right)+x_0(0)-2-\sqrt{2}},$$ which converges to $2-\sqrt{2}$ as $t\rightarrow\infty$ independent of $x_0(0).$ Therefore, it is easy to verify that the system is globally, asymptotically stable. Furthermore, the linearized system at the equilibrium is
$$\dot{\epsilon}_0=-\left(\alpha+\beta(1-x_0^*)+1\right)\epsilon_0,$$ where $\epsilon_0=x_0-x^*_0$ and $x^*_0$ is the equilibrium value, so the equilibrium point is locally exponentially stable. Furthermore, the mean-field model is perfect in this case and it can be easily verified that all other conditions in Theorem \ref{thm:main} hold. So in the mean square sense, stationary distributions converge to the $x_0=2-\sqrt{2}$ and $x_1=\sqrt{2}-1$ with rate $1/M.$ Numerical evaluation of $ME_{{\bf x}^{(M)}}\left[\left(x_0-x_0^*\right)^2\right]$ versus $M$ is shown in Figure \ref{fig:msd}, where $M$ varies from 100 to 1,000. We can see that $ME_{{\bf x}^{(M)}}\left[\sum_{i=1}^n \left(x_i-x_i^*\right)^2\right]$ varies within the interval [0.21, 0.27] while the size of the system increases by 10 times (from 100 to 1,000). The standard deviation (deviation from $2-\sqrt{2}$) is 0.02177 (3.72\%$\times (2-\sqrt{2})$)) when $M=100$ and is 0.0068 (1.16\%$\times (2-\sqrt{2})$) when $M=1,000.$

\begin{figure}[htb]
        \centering
 		\includegraphics[width=0.6\columnwidth]{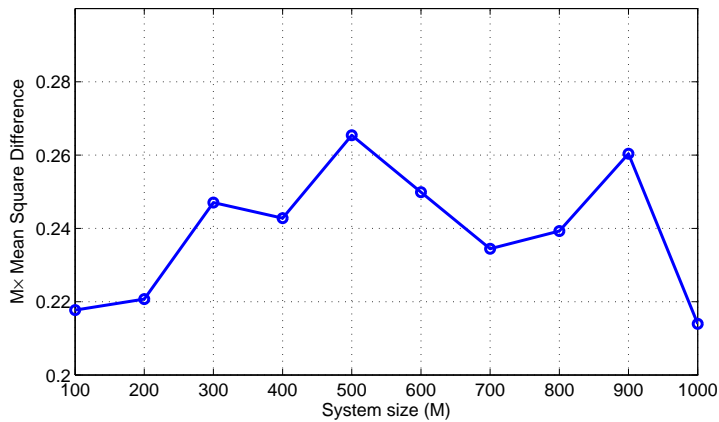}
        \caption{Numerical evaluation of $ME_{{\bf x}^{(M)}}\left[\left(x_0-x_0^*\right)^2\right]$ versus $M.$ }\label{fig:msd}
\end{figure}

\section{The Perturbation Theory}
\label{sec:pert}
In this section, we summarize the results of the perturbation theory for nonlinear systems. These results are special cases of those results presented in \cite{Kha_01} because we only need to consider a perturbation to the initial condition. Furthermore, the mean-field model considered in this paper is an autonomous system, which again is a special case of the nonlinear system considered in \cite{Kha_01}. For these reasons, the analysis of the perturbation results can be simplified. On the other hand, the perturbation method introduced in \cite{Kha_01} only states that the 2-norm of the following error is is at the order of $\|{\bf y}-{\bf x}\|^2$ independent of $t$ (under certain conditions) \begin{equation*}
{\bf e}(t)={\bf x}(t, {\bf y})-{\bf x}(t, {\bf x})-\triangledown {\bf x}(t, {\bf x})\cdot({\bf y}-{\bf x})
\end{equation*} Our result on the rate of convergence, however, requires such an upper bound on the cumulative error, i.e., an upper bound on
 \begin{equation*}
\int_0^\infty {\bf e}(t)\,dt.
\end{equation*} Therefore, it is necessary to go through the detailed analysis for the system considered in this paper to establish the result for the cumulative error.  For the completeness of the paper and the easy reference of the reader, we next introduce the perturbation results in \cite{Kha_01} with a more detailed calculation of $\|{\bf e}(t)\|,$ which shows that not only the approximation error is bounded, but the upper bound decays exponentially to zero as $t$ increases. The analysis closely follows \cite{Kha_01}.

Consider the system \begin{equation} \dot{{\bf x}}=f({\bf x})\label{ori-sys}\end{equation} where $f: {\cal D}\subseteq [0, 1]^n\rightarrow R^{n}.$  {\em Without the loss of generality, we assume ${\bf x}^*={\bf 0}.$} We are interested in comparing the solution of this nominal system with the system with a perturbation on the initial condition ${\bf x}(0)={\bf x}+\epsilon {\bf z},$ where ${\bf z}=\frac{1}{\epsilon}({\bf y}-{\bf x})$ and is an $n$-dimensional vector. For the mean-field analysis considered in this paper, $\epsilon=1/M.$ Under the condition of Theorem \ref{thm:main}, for any neighboring states $\bf x$ and $\bf y,$ $$\|{\bf z}\|=\frac{1}{\epsilon}\|({\bf y}-{\bf x})\|\leq \tilde{c}.$$

Let ${\bf x}(t, \epsilon)$ to denote the solution of the dynamical system with initial perturbation $\epsilon.$ Note that the dependence of the solution on ${\bf y}-{\bf x}$ is omitted to simplify the notation. The analysis holds for any ${\bf y}$ and ${\bf x}.$ We next first repeat the assumptions on the nominal dynamical system.

\begin{assumption} For any $i,$ the function $f_i({\bf x})$ is twice continuously differentiable. Therefore, the Jacobian matrix of $f({\bf x}),$ denoted by $\frac{\partial f}{\partial x},$ is Lipschitz. In other words, there exists a constant $L>0$ such that
\begin{eqnarray*}
\left\|\frac{\partial f}{\partial x}\left({\bf x}\right)-\frac{\partial f}{\partial x}\left({\bf y}\right)\right\|&\leq& L \|{\bf x}-{\bf y}\|.
\end{eqnarray*}
\hfill{$\square$}\label{asump:2-diff}
\end{assumption}

\begin{assumption} The dynamical system (\ref{ori-sys}) has a unique equilibrium point and is exponentially stable. In other words, there exist positive constants $\alpha$ and $\kappa$ such that starting from any initial condition ${\bf x}(0)\in {\cal D},$ \begin{equation}
\|{\bf x}(t)\|\leq \kappa \|{\bf x}(0)\|\exp\left(-\alpha t\right).\label{eq:exconv}
\end{equation}
Under this assumption, according to Theorem 4.14 in \cite{Kha_01}, there exist a Lyapunov function $V({\bf x})$ and positive constants $c_u,$ $c_l,$ $c_d,$ and $c_p$ such that for any ${\bf x}\in{\cal D},$ the following inequalities hold
\begin{eqnarray*}
c_l\|{\bf x}\|^2\leq V({\bf x})\leq c_u \|{\bf x}\|^2\\
\dot{V}({\bf x})\leq -c_d \|{\bf x}\|^2\\
\left\|\triangledown V({\bf x})\right\|\leq c_p\|{\bf x}\|.
\end{eqnarray*}
 \hfill{$\square$}\label{asump:e-stability}
\end{assumption}

We first consider the finite Taylor series for ${\bf x}(t, \epsilon)$ in terms of $\epsilon:$
\begin{equation}
{\bf x}(t,\epsilon)={\bf x}^{(0)}(t) +\epsilon {\bf x}^{(1)}(t)+ {\bf e}(t),\label{eq:taylor}
\end{equation} and
\begin{equation}
{\bf x}(0,\epsilon)={\bf x}+\epsilon {\bf z},
\end{equation} where $${\bf x}^{(0)}(t)={\bf x}(t,0) \hbox{ and } {\bf x}^{(1)}(t)=\left.\frac{d {\bf x}}{d \epsilon}(t,\epsilon)\right|_{\epsilon=0}.$$
Substituting (\ref{eq:taylor}) into the dynamical system equation, we get
\begin{eqnarray}
\dot{\bf x}(t,\epsilon)&=&\dot{\bf x}^{(0)}(t) +\epsilon \dot{\bf x}^{(1)}(t)+ \dot{\bf e}(t)=f({\bf x}(t,\epsilon))\\
&=& h^{(0)}({\bf x}^{(0)}(t))+h^{(1)}({\bf x}^{(\leq 1)}(t))\epsilon+R_{\bf e}(t,\epsilon),
\end{eqnarray} where ${\bf x}^{(\leq 1)}=\left({\bf x}^{(0)}, {\bf x}^{(1)}\right).$ The zero-order term $h^{(0)}$ is given by
$$\dot{x}^{(0)}(t)=h^{(0)}\left({\bf x}^{(0)}(t)\right)=f\left({\bf x}^{(0)}(t)\right)\quad \hbox{with}\quad{\bf x}^{(0)}(0)={\bf x},$$ which is the nominal system without the perturbation on the initial condition. The first-order term is given by
\begin{eqnarray*}
h^{(1)}\left({\bf x}^{({\leq 1})}(t)\right)&=&\left.\frac{d}{d \epsilon}f({\bf x}(t,\epsilon))\right|_{\epsilon=0}\\
&=&\left. \frac{\partial f}{\partial x}({\bf x}(t,\epsilon))\frac{d {\bf x}}{d \epsilon}(t,\epsilon)\right|_{\epsilon=0}\\
&=&\frac{\partial f}{\partial x}({\bf x}^{(0)}(t)){\bf x}^{(1)}(t).
\end{eqnarray*} Recall that $\frac{\partial f}{\partial x}$ is the Jacobian matrix. Therefore, we have
\begin{equation}\dot{\bf x}^{(1)}(t)=\frac{\partial f}{\partial x}({\bf x}^{(0)}(t)){\bf x}^{(1)}(t)\quad \hbox{with}\quad{\bf x}^{(1)}(0)={\bf z}. \label{1st-sys}
\end{equation}

We next study ${\bf e}(t)={\bf x}(t,\epsilon)-{\bf x}^{(0)}(t)-\epsilon{\bf x}^{(1)}(t).$ Combining the results above, we have
\begin{eqnarray*}
\dot{\bf e}(t)&=&f\left({\bf x}(t,\epsilon)\right)-f\left({\bf x}^{(0)}(t)\right)-\epsilon\frac{\partial f}{\partial x}({\bf x}^{(0)}(t)){\bf x}^{(1)}(t)\\
{\bf e}(0)&=&0.
\end{eqnarray*}
Now by defining
\begin{eqnarray*}
\rho\left({\bf x}^{(\leq 1)}(t),{\bf e}(t),\epsilon\right)&=&f\left({\bf e}(t)+{\bf x}^{(0)}(t)+\epsilon{\bf x}^{(1)}(t)\right)-f\left({\bf x}^{(0)}(t)+\epsilon{\bf x}^{(1)}(t)\right)-\frac{\partial f}{\partial x}({\bf x}^{(0)}(t)){\bf e}(t)\\
\gamma\left({\bf x}^{(\leq 1)}(t),\epsilon\right)&=&f\left({\bf x}^{(0)}(t)+\epsilon{\bf x}^{(1)}(t)\right)-f\left({\bf x}^{(0)}(t)\right)-\epsilon\frac{\partial f}{\partial x}({\bf x}^{(0)}(t)){\bf x}^{(1)}(t),
\end{eqnarray*}
we obtain
\begin{eqnarray}
\dot{\bf e}(t)&=&\frac{\partial f}{\partial x}({\bf x}^{(0)}(t)){\bf e}(t)+\rho\left({\bf x}^{(\leq 1)}(t),{\bf e}(t),\epsilon\right)+\gamma\left({\bf x}^{(\leq 1)}(t),\epsilon\right).\label{eq: errorpde}
\end{eqnarray}

Note that both $\rho$ and $\gamma$ are $n$-dimensional vectors. It is easy to see that
\begin{equation}
\rho\left({\bf x}^{(\leq 1)}(t),0,\epsilon\right)={\bf 0}.\label{rho-0}
\end{equation} According to Taylor's theorem and the mean value theorem, we have
\begin{eqnarray*}
\gamma_l\left({\bf x}^{(\leq 1)}(t),\epsilon\right)&=&\epsilon^2 {\bf x}^{(1)}(t)^T {\bf H}(f_l)(\xi){\bf x}^{(1)}(t)=\epsilon^2\sum_{i,j}\frac{\partial^2 f_l}{\partial x_i \partial x_j}(\xi)\left(x^{(1)}_i(t)x^{(1)}_j(t)\right)
\end{eqnarray*} for $\xi={\bf x}^{(0)}(t)+\alpha \epsilon {\bf x}^{(1)}(t)$ for some $0\leq \alpha\leq 1.$
${\bf H}(f_l)$ is the Hessian matrix of function $f_l({\bf x}).$
Then we have
\begin{eqnarray*}
\left|\gamma_l\left({\bf x}^{(\leq 1)}(t),\epsilon\right)\right|&=&\epsilon^2\left|\sum_{i,j}\frac{\partial^2 f_l}{\partial x_i \partial x_j}(\xi)\left(x^{(1)}_i(t)x^{(1)}_j(t)\right)\right|.
\end{eqnarray*}

Furthermore, we have
\begin{eqnarray*}
\frac{\partial \rho_l}{\partial e_i}\left({\bf x}^{(\leq 1)}(t),{\bf e}(t),\epsilon\right)&=&\frac{\partial f_l}{\partial x_i}({\bf e}(t)+{\bf x}^{(0)}(t)+\epsilon{\bf x}^{(1)}(t))-\frac{\partial f_l}{\partial x_i}({\bf x}^{(0)}(t)).
\end{eqnarray*}
According to the mean-value theorem and (\ref{rho-0}), we have that
\begin{eqnarray*}
\rho\left({\bf x}^{(\leq 1)}(t),{\bf e}(t),\epsilon\right)= \left(\frac{\partial f}{\partial x}(\tilde{\bf e}(t)+{\bf x}^{(0)}(t)+\epsilon{\bf x}^{(1)}(t))-\frac{\partial f}{\partial x}({\bf x}^{(0)}(t))\right) {\bf e}(t),
\end{eqnarray*}  where $\tilde{\bf e}(t)=a {\bf e}(t)$ for some $0\leq a \leq 1.$
According to the Lipschitz condition in Assumption (\ref{asump:2-diff}) and the Cauchy - Schwarz inequality,  we have
\begin{eqnarray*}
\left\|\rho\left({\bf x}^{(\leq 1)}(t),{\bf e}(t),\epsilon\right)\right\|&\leq& L \left(\|{\bf e}(t)\|+\epsilon\|{\bf x}^{(1)}(t)\|\right)\|{\bf e}(t)\|.
\end{eqnarray*}

Now we utilize the assumption that the nominal system (\ref{ori-sys}) converges to the equilibrium point exponentially fast from any initial condition in the domain. We use the Lyapunov function in Assumption (\ref{asump:e-stability}) to bound $\|{\bf e}(t)\|.$ We start from
\begin{eqnarray*}
&&\dot{V}({\bf e}(t))\\
&=& \triangledown V({\bf e}(t))\cdot \dot{\bf e}(t)\\
&=&\triangledown V({\bf e}(t))\cdot f({\bf e}(t))+\triangledown V({\bf e}(t))\cdot \left(\dot{\bf e}(t)-f({\bf e}(t))\right)\\
&\leq_{(a)}& -c_d{V}({\bf e}(t))+\triangledown V({\bf e}(t))\cdot \left(\dot{\bf e}(t)-f({\bf e}(t))\right)\\
&=& -c_d{V}({\bf e}(t))+\triangledown V({\bf e}(t))\cdot \left(\frac{\partial f}{\partial {\bf x}}({\bf x}^{(0)}(t)){\bf e}(t)-\frac{\partial f}{\partial {\bf x}}(0){\bf e}(t)+\frac{\partial f}{\partial {\bf x}}(0){\bf e}(t)-f({\bf e}(t))\right.\\
&&\hspace{1.7in}\left.+\rho\left({\bf x}^{(\leq 1)}(t),{\bf e}(t),\epsilon\right)+\gamma\left({\bf x}^{(\leq 1)}(t),\epsilon\right)\right)\\
&\leq&-c_d{V}({\bf e}(t))+\left\|\triangledown V({\bf e}(t))\right\|\left(\left\|\frac{\partial f}{\partial {\bf x}}({\bf x}^{(0)}(t)){\bf e}(t)-\frac{\partial f}{\partial {\bf x}}(0){\bf e}(t)\right\|+\left\|\frac{\partial f}{\partial {\bf x}}(0){\bf e}(t)-f({\bf e}(t))\right\|\right.\\
&&\hspace{1.7in}\left.+\left\|\rho\left({\bf x}^{(\leq 1)}(t),{\bf e}(t),\epsilon\right)\right\|+\left\|\gamma\left({\bf x}^{(\leq 1)}(t),\epsilon\right)\right\|\right)
\end{eqnarray*} where inequality (a) is due to assumption (\ref{asump:e-stability}) and the last inequality is a result of the  Cauchy – Schwarz inequality. Note that based on Assumption (\ref{asump:2-diff}) and the mean-value theorem, we have
\begin{eqnarray*}
\left\|\frac{\partial f}{\partial {\bf x}}({\bf x}^{(0)}(t)){\bf e}(t)-\frac{\partial f}{\partial {\bf x}}(0){\bf e}(t)\right\|&\leq& L \|{\bf x}^{(0)}(t)\|\|{\bf e}(t)\|\\
\left\|\frac{\partial f}{\partial {\bf x}}(0){\bf e}(t)-f({\bf e}(t))\right\|&\leq& L\|{\bf e}(t)\|^2.
\end{eqnarray*} We also know that
\begin{eqnarray*}
\left\|\rho\left({\bf x}^{(\leq 1)}(t),{\bf e}(t),\epsilon\right)\right\|&\leq& L \left(\|{\bf e}(t)\|+\epsilon\|{\bf x}^{(1)}(t)\|\right)\|{\bf e}(t)\|\\
\left\|\gamma\left({\bf x}^{(\leq 1)}(t),\epsilon\right)\right\|&=&\epsilon^2A(t),
\end{eqnarray*} where we define
$$A(t)=\sqrt{\sum_l \left(\sum_{i,j}\frac{\partial^2 f_l}{\partial x_i \partial x_j}(\xi)\left(x^{(1)}_i(t)x^{(1)}_j(t)\right)\right)^2}$$ to simplify the notation.

Summarizing the results above, we get
\begin{eqnarray*}
&&\dot{V}({\bf e}(t))\\
&\leq&-c_d{V}({\bf e}(t))+L\left\|\triangledown V({\bf e}(t))\right\|\left(\|{\bf x}^{(0)}(t)\|+\epsilon\|{\bf x}^{(1)}(t)\|+2\|{\bf e}(t)\|\right)\|{\bf e}(t)\|+\left\|\triangledown V({\bf e}(t))\right\|A(t)\epsilon^2\\
&\leq&-c_d{V}({\bf e}(t))+Lc_p\left(\|{\bf x}^{(0)}(t)\|+\epsilon\|{\bf x}^{(1)}(t)\|+2\|{\bf e}(t)\|\right)\|{\bf e}(t)\|^2+c_pA(t)\epsilon^2 \|{\bf e}(t)\|\\
&\leq&-c_d{V}({\bf e}(t))+L\frac{c_p}{c_l}\left(\|{\bf x}^{(0)}(t)\|+\epsilon\|{\bf x}^{(1)}(t)\|+2\|{\bf e}(t)\|\right){V}({\bf e}(t))+\frac{c_p}{\sqrt{c_l}}A(t)\epsilon^2 \sqrt{{V}({\bf e}(t))}.
\end{eqnarray*} Define $W(t)=\sqrt{V(t)},$ then we have
\begin{eqnarray*}
\dot{W}({\bf e}(t))&\leq&-\frac{c_d}{2}{W}({\bf e}(t))+\frac{L}{2}\frac{c_p}{c_l}\left(\|{\bf x}^{(0)}(t)\|+\epsilon\|{\bf x}^{(1)}(t)\|+2\|{\bf e}(t)\|\right){W}({\bf e}(t))+\frac{c_p}{\sqrt{c_1}}A(t)\epsilon^2.
\end{eqnarray*} By the comparison lemma in  \cite{Kha_01}, we have
\begin{eqnarray}
W(t)&\leq&\phi(t,0)W(0)+\frac{c_p}{\sqrt{c_1}}\epsilon^2\int_0^t \phi(t,\tau)A(\tau)\,d\tau\\
&=&\frac{c_p}{\sqrt{c_1}}\epsilon^2\int_0^t \phi(t,\tau)A(\tau)\,d\tau \label{eq:w}
\end{eqnarray} where the transition function $\phi(t,\tau)$ is
$$\phi(t,\tau)=\exp\left(-\frac{c_d}{2}(t-\tau)+\frac{L}{2}\frac{c_p}{c_l}\int_\tau^t\left(\|{\bf x}^{(0)}(\gamma)\|+\epsilon\|{\bf x}^{(1)}(\gamma)\|+2\|{\bf e}(\gamma)\|\right)\,d\gamma \right),$$ and the equality holds because ${\bf e}(0)=0.$

The following lemma proves that the first-order system (\ref{1st-sys}) converges exponentially starting from any initial condition in $\cal D.$
\begin{lemma}
The first-order system (\ref{1st-sys}) is exponentially stable for any solution ${\bf x}^{(0)}(t)$ that starts from $\cal D.$
\end{lemma}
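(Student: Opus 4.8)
The plan is to read (\ref{1st-sys}) as a linear time-varying system $\dot{\bf x}^{(1)}=B(t){\bf x}^{(1)}$ with $B(t)=\frac{\partial f}{\partial x}\!\left({\bf x}^{(0)}(t)\right)$, and to exploit that, because the nominal trajectory ${\bf x}^{(0)}(t)$ converges exponentially to ${\bf 0}$, the coefficient $B(t)$ converges exponentially to the constant matrix $A=\frac{\partial f}{\partial x}({\bf 0})$, which is Hurwitz. Thus (\ref{1st-sys}) is, up to a coefficient perturbation that dies out exponentially in $t$, the stable LTI system $\dot{\bf w}=A{\bf w}$. The natural certificate for the LTI limit is the \emph{global} quadratic form $V_Q({\bf w})={\bf w}^T P{\bf w}$ with $P$ the symmetric positive definite solution of $PA+A^T P=-I$; I would use this rather than the converse Lyapunov function $V$ of Assumption \ref{asump:e-stability}, precisely because solutions of the linear system (\ref{1st-sys}) are not confined to the compact set $\cal D$ on which $V$ is guaranteed to exist.

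The steps I would carry out are: (i) invoke that exponential stability of ${\bf x}^*={\bf 0}$ implies $A$ is Hurwitz (Theorem 4.15 in \cite{Kha_01}), so $P$ above exists and $\lambda_{\min}(P)\|{\bf w}\|^2\le V_Q({\bf w})\le\lambda_{\max}(P)\|{\bf w}\|^2$; (ii) bound the coefficient perturbation uniformly: since ${\cal D}\subseteq[0,1]^n$ forces $\|{\bf x}^{(0)}(0)\|\le\sqrt n$, inequality (\ref{eq:exconv}) gives $\|{\bf x}^{(0)}(t)\|\le\kappa\sqrt n\,e^{-\alpha t}$, and the Lipschitz bound of Assumption \ref{asump:2-diff} yields
\begin{equation*}
\|B(t)-A\|=\left\|\frac{\partial f}{\partial x}({\bf x}^{(0)}(t))-\frac{\partial f}{\partial x}({\bf 0})\right\|\le L\,\|{\bf x}^{(0)}(t)\|\le L\kappa\sqrt n\,e^{-\alpha t},
\end{equation*}
a bound independent of which nominal trajectory (and of $\epsilon=1/M$) we started from; (iii) differentiate $V_Q$ along (\ref{1st-sys}): from $B(t)^TP+PB(t)=-I+(B(t)-A)^TP+P(B(t)-A)$ one gets $\dot V_Q\le-\bigl(1-2\|P\|\,L\kappa\sqrt n\,e^{-\alpha t}\bigr)\|{\bf x}^{(1)}\|^2$; (iv) conclude: choose a finite $T$ with $2\|P\|L\kappa\sqrt n\,e^{-\alpha T}\le\tfrac12$, so that for $t\ge T$, $\dot V_Q\le-\tfrac{1}{2\lambda_{\max}(P)}V_Q$ and hence $\|{\bf x}^{(1)}\|$ decays exponentially on $[T,\infty)$, while on $[0,T]$ the uniform bound $\|B(t)\|\le\|A\|+L\kappa\sqrt n$ gives at worst a fixed finite amplification of $\|{\bf x}^{(1)}(0)\|=\|{\bf z}\|$; splicing the two yields $\|{\bf x}^{(1)}(t)\|\le\kappa'\|{\bf z}\|\,e^{-\alpha' t}$ with $\kappa',\alpha'>0$ depending only on $P,L,\kappa,\alpha$ and $n$. (One can also avoid the case split by a direct integrating-factor argument, using $\int_0^\infty e^{-\alpha t}\,dt<\infty$.)

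The main obstacle is the one flagged above: the converse-Lyapunov apparatus of Assumption \ref{asump:e-stability} is tied to the compact domain $\cal D$, whereas (\ref{1st-sys}) is linear and its solutions generally leave $\cal D$; the fix is to certify stability of the LTI limit $\dot{\bf w}=A{\bf w}$ with the globally defined quadratic $V_Q$ and treat $B(t)-A$ as an exponentially vanishing perturbation. A secondary point, important for the downstream cumulative-error bound (\ref{eq:accumulated-error}) and hence for Theorem \ref{thm:main}, is that every constant produced here be independent of the nominal trajectory and of $\epsilon$; this is secured by the uniform estimate $\|{\bf x}^{(0)}(t)\|\le\kappa\sqrt n\,e^{-\alpha t}$ in step (ii).
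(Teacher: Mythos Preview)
Your proposal is correct and takes essentially the same approach as the paper: both introduce the quadratic Lyapunov function ${\bf w}^T P{\bf w}$ for the Hurwitz linearization $A=\frac{\partial f}{\partial x}({\bf 0})$ (rather than the converse-Lyapunov function of Assumption~\ref{asump:e-stability}, for exactly the domain reason you flag), bound the perturbation $B(t)-A$ by $L\|{\bf x}^{(0)}(t)\|$ via Assumption~\ref{asump:2-diff}, and use the exponential decay of $\|{\bf x}^{(0)}(t)\|$ to conclude. The only cosmetic difference is that the paper applies the comparison lemma directly (your integrating-factor alternative) instead of splitting at a finite time $T$, and it retains explicit dependence on $\|{\bf x}(0)\|$ in the final bound (\ref{eq:exconv-1}) rather than absorbing it via $\|{\bf x}(0)\|\le\sqrt n$.
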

\begin{proof}
That the nominal system is exponentially stable implies that the following linear system
$$\dot{\bf x}=\frac{\partial f}{\partial {x}}(0){\bf x}$$ is (globally) exponentially stable, and $\frac{\partial f}{\partial { x}}(0)$ is Hurwitz (Corollary 4.3 in \cite{Kha_01}), which further implies that there exists a positive definite symmetric matrix ${\bf P}$ such that
$$V({\bf x})={\bf x}^T{\bf P}{\bf x}$$ is a Lyapunov function for the linear system such that
\begin{equation}\dot{V}({\bf x})\leq -\|{\bf x}\|^2.\label{drift}\end{equation}
We start from
\begin{eqnarray*}
\dot{V}({\bf x}^{(1)}(t))&=& \triangledown V({\bf x}^{(1)}(t))\cdot \dot{\bf x}^{(1)}(t)\\
&=&\triangledown V({\bf x}^{(1)}(t))\cdot \frac{\partial f}{\partial {x}}(0){\bf x}^{(1)}(t)+\triangledown V({\bf x}^{(1)}(t))\cdot \left(\frac{\partial f}{\partial {x}}({\bf x}^{(0)}(t)){\bf x}^{(1)}(t)-\frac{\partial f}{\partial {x}}(0){\bf x}^{(1)}(t)\right)\\
&\leq_{(a)}& -\left\|{\bf x}^{(1)}(t)\right\|^2+2\lambda_{\max}\left({\bf P}\right)L\left\|{\bf x}^{(0)}(t)\right\|\left\|{\bf x}^{(1)}(t)\right\|^2\\
&\leq& -\frac{1}{\lambda_{\max}\left({\bf P}\right)}{V}({\bf x}^{(1)}(t))+\frac{2\lambda_{\max}\left({\bf P}\right)}{\lambda_{\min}\left({\bf P}\right)}L\left\|{\bf x}^{(0)}(t)\right\|V\left({\bf x}^{(1)}(t)\right)\\
&\leq& -\left(\frac{1}{\lambda_{\max}\left({\bf P}\right)}-\frac{2\lambda_{\max}\left({\bf P}\right)}{\lambda_{\min}\left({\bf P}\right)}L\left\|{\bf x}^{(0)}(t)\right\|\right)V\left({\bf x}^{(1)}(t)\right)
\end{eqnarray*} where inequality $(a$) is based on (\ref{drift}) and the definition of $V({\bf x})$, Assumption (\ref{asump:2-diff}) and the mean-value theorem, and $\lambda_{\max}({\bf P})$ is the largest eigenvalue of matrix $\bf P.$

By the comparison lemma, we have
\begin{eqnarray*}
V(t)&\leq& \exp\left(-\frac{1}{\lambda_{\max}({\bf P})}t+\frac{2\lambda_{\max}\left({\bf P}\right)}{\lambda_{\min}\left({\bf P}\right)}L\int_0^t \left\|{\bf x}^{(0)}(\tau)\right\| \,d\tau\right)V(0)\\
&\leq& \exp\left(\frac{2\lambda_{\max}\left({\bf P}\right)L\kappa \|{\bf x}(0)\|}{\alpha \lambda_{\min}\left({\bf P}\right)}\right)\exp\left(-\frac{1}{\lambda_{\max}({\bf P})}t\right)V(0),
\end{eqnarray*} where the last inequality holds because the exponential convergence assumption (\ref{asump:e-stability}) yields
\begin{eqnarray*}
\int_0^t \left\|{\bf x}^{(0)}(\tau)\right\| \,d\tau\leq\int_0^{\infty} \left\|{\bf x}^{(0)}(\tau)\right\| \,d\tau \leq \frac{\kappa \|{\bf x}(0)\|}{\alpha}.
\end{eqnarray*}
Recall that ${\bf x}^{(1)}(0)={\bf z},$ so $V(0)={\bf z}^T {\bf P}{\bf z}=p$ and
\begin{eqnarray}
\|{\bf x}^{(1)}(t)\|^2&\leq&  \frac{p}{\lambda_{\min}({\bf P})}\exp\left(\frac{2\lambda_{\max}\left({\bf P}\right)L\kappa \|{\bf x}(0)\|}{\alpha \lambda_{\min}\left({\bf P}\right)}\right)\exp\left(-\frac{1}{\lambda_{\max}({\bf P})}t\right).\label{eq:exconv-1}
\end{eqnarray}
\end{proof}

From the lemma above and assumption (\ref{asump:2-diff}), we have that there exists a constant $\mu$ such that
$$A(t)\leq \mu \|{\bf x}^{(1)}(t)\|^2\leq \frac{ \mu p}{\lambda_{\min}({\bf P})}\exp\left(\frac{2\lambda_{\max}\left({\bf P}\right)L\kappa \|{\bf x}(0)\|}{\alpha \lambda_{\min}({\bf P})}\right)\exp\left(-\frac{1}{\lambda_{\max}({\bf P})}t\right).$$

Consider the set such that $\|{\bf e}(t)\|\leq \frac{c_d c_l}{4c_p L},$ we have
\begin{eqnarray}
&&\phi(t,\tau)\\
&\leq&\exp\left(-\frac{c_d}{2}(t-\tau)+\frac{L}{2}\frac{c_p}{c_l}\int_\tau^t\left(\|{\bf x}^{(0)}(\gamma)\|+\epsilon\|{\bf x}^{(1)}(\gamma)\|\right) \,d\gamma \right)\\
&\leq&\exp\left(-\frac{c_d}{4}(t-\tau)+\frac{L}{2}\frac{c_p}{c_l}\left(\frac{\kappa \|{\bf x}(0)\|}{\alpha}+2\epsilon \frac{\lambda_{\max}({\bf P})\sqrt{p}}{\sqrt{\lambda_{\min}({\bf P})}} \exp\left(\frac{\lambda_{\max}({\bf P})L\kappa \|{\bf x}(0)\|}{\alpha \lambda_{\min}({\bf P})}\right)\right)\right),
\end{eqnarray} where last inequality yields from (\ref{eq:exconv}) and  (\ref{eq:exconv-1}).

Recall we have inequality (\ref{eq:w})
\begin{eqnarray}
W(t)&\leq&\frac{c_p}{\sqrt{c_1}}\epsilon^2\int_0^t \phi(t,\tau)A(\tau)\,d\tau \label{eq:w2}.
\end{eqnarray}  Substituting the bounds on $\phi(t,\tau)$ and $A(\tau),$ we obtain
\begin{eqnarray*}
W(t)&\leq& \epsilon^2 \sqrt{c_1}Z(\|{\bf x}(0)\|)\int_0^t \exp\left(-\frac{c_d}{4}(t-\tau)-\frac{1}{\lambda_{\max}({\bf P})}\tau\right)\,d\tau,
\end{eqnarray*} where
\begin{eqnarray*}
&&Z(\|{\bf x}(0)\|)\\
&=&\frac{c_p}{{c_1}}\exp\left(\frac{L}{2}\frac{c_p}{c_l}\left(\frac{\kappa \|{\bf x}(0)\|}{\alpha}+2\epsilon \frac{\lambda_{\max}({\bf P})\sqrt{p}}{\sqrt{\lambda_{\min}({\bf P})}} \exp\left(\frac{\lambda_{\max}({\bf P})L\kappa \|{\bf x}(0)\|}{\alpha \lambda_{\min}({\bf P})}\right)\right)\right)\times\\
&&\frac{ \mu p}{\lambda_{\min}({\bf P})}\exp\left(\frac{2\lambda_{\max}\left({\bf P}\right)L\kappa \|{\bf x}(0)\|}{\alpha \lambda_{\min}({\bf P})}\right)\\
&=&\frac{c_p \mu p}{{c_1}\lambda_{\min}({\bf P})}\exp\left(\frac{\kappa L}{\alpha}\left(\frac{c_p}{2c_l}+\frac{2\lambda_{\max}({\bf P})}{\lambda_{\min}({\bf P})}\right) \|{\bf x}(0)\|+\epsilon\sqrt{p}\frac{c_p L}{c_l} \frac{\lambda_{\max}({\bf P})}{\sqrt{\lambda_{\min}({\bf P})}} \exp\left(\frac{\lambda_{\max}({\bf P})L\kappa \|{\bf x}(0)\|}{\alpha \lambda_{\min}({\bf P})}\right)\right).
\end{eqnarray*} In other words, we have
\begin{eqnarray}
\|{\bf e}(t)\|&\leq& \epsilon^2 Z(\|{\bf x}(0)\|)\int_0^t \exp\left(-\frac{c_d}{4}(t-\tau)-\frac{1}{\lambda_{\max}({\bf P})}\tau\right)\,d\tau\\
&=&\left\{
     \begin{array}{ll}
       \epsilon^2 Z(\|{\bf x}(0)\|)\frac{1}{\frac{c_d}{4}-\frac{1}{\lambda_{\max}({\bf P})}}\left(\exp\left(-\frac{1}{\lambda_{\max}({\bf P})}t\right)-\exp\left(-\frac{c_d}{4}t\right)\right), & \hbox{ if } \frac{c_d}{4}\not=\frac{1}{\lambda_{\max}({\bf P})} \\
       \epsilon^2 Z(\|{\bf x}(0)\|)t \exp\left(-\frac{c_d}{4}t\right) , & \hbox{ otherwise.}
     \end{array}
   \right.
.\label{eq:error}
\end{eqnarray} Then
\begin{eqnarray*}
\int_0^\infty \|{\bf e}(t)\|\,dt&\leq& \left\{
     \begin{array}{ll}
       \epsilon^2 Z(\|{\bf x}(0)\|)\frac{4 \lambda_{\max}({\bf P})}{c_d} & \hbox{ if } \frac{c_d}{4}\not=\frac{1}{\lambda_{\max}({\bf P})}\\
       \epsilon^2 Z(\|{\bf x}(0)\|)\frac{16}{c_d^2} , & \hbox{ otherwise.}
     \end{array}
   \right..\\
&=&\epsilon^2 Z(\|{\bf x}(0)\|)\frac{4 \lambda_{\max}({\bf P})}{c_d}.
\end{eqnarray*} It is easy to see that with properly defined $\alpha_1,$ $\alpha_2,$ $\alpha_3$ and $\alpha_4,$ we have
\begin{eqnarray}
\int_0^\infty \|{\bf e}(t)\|\,dt&\leq& \epsilon^2 p \alpha_1\exp\left(\alpha_2\frac{\|{\bf x}(0)\|}{\alpha}+\epsilon \sqrt{p} \alpha_3\exp\left(\alpha_4 \frac{\|{\bf x}(0)\|}{\alpha}\right)\right).\label{eq:accumulated-error}
\end{eqnarray} We keep the terms $\|{\bf x}(0)\|$ and $\alpha$ to show that the cumulative error depends on the initial condition and the convergence rate of the mean-field model. Furthermore, $p={\bf z}^T {\bf P}{\bf z}\leq \lambda_{\max}({\bf P})\|{\bf z}\|^2.$

\section{Conclusion}
This paper studies the convergence of the stationary distributions of a family of CTMCs to the mean-field limit. When the mean-field model is perfect, the rate of convergence (the mean-square difference) has been proved to be $O(1/M).$ Based on Stein's method for bounding the distance of probability distributions and the perturbation theory for nonlinear systems, a fundamental connection between the convergence to the mean-field limit and the stability of the mean-field model has been established.

\section*{Acknowledgement} The author is very grateful to Jim Dai and Anton Braverman. Jim's seminar on Stein's method for the steady-state diffusion approximations inspired this work. The discussions with Jim and Anton had continuously stimulated the author during the writing of this paper.
\bibliographystyle{IEEEtran}
\bibliography{U:/bib/inlab-refs}
\end{document}